\newcommand{\E}{\mathrm{E}}
\newcommand{\N}{\mathcal{N}}
\newcommand{\tr}{\text{tr}}
\newcommand{\blind}{0}
\newtheorem{theorem}{Theorem}[section]
\newtheorem{lemma}[theorem]{Lemma}
\begin{document}

\def\spacingset#1{\renewcommand{\baselinestretch}%
{#1}\small\normalsize} \spacingset{1}


\if0\blind
{
  \title{\bf An efficient and accurate approximation to the distributions of quadratic forms of Gaussian variables}
  \author{Hong Zhang, Judong Shen\hspace{.2cm}\\
    Biostatistics and Research Decision Sciences, MRL, Merck \& Co., Inc.\\
    and \\
    Zheyang Wu \\
Department of Mathematical Sciences, Worcester Polytechnic Institute}
  \maketitle
} \fi

\if1\blind
{
  \bigskip
  \bigskip
  \bigskip
  \begin{center}
    {\LARGE\bf Title}
\end{center}
  \medskip
} \fi

\bigskip
\begin{abstract}
In computational and applied statistics, it is of great interest to get fast and accurate calculation for the distributions of the quadratic forms of Gaussian random variables. This paper presents a novel approximation strategy that contains two developments. First, we propose a faster numerical procedure in computing the moments of the quadratic forms. Second, we establish a general moment-matching framework for distribution approximation, which covers existing approximation methods for the distributions of the quadratic forms of Gaussian variables. Under this framework, a novel moment-ratio method (MR) is proposed to match the ratio of skewness and kurtosis based on the gamma distribution. Our extensive simulations show that 1) MR is almost as accurate as the exact distribution calculation and is much more efficient; 2) comparing with existing approximation methods, MR significantly improves the accuracy of approximating far right tail probabilities. The proposed method has wide applications. For example, it is a better choice than existing methods for facilitating hypothesis testing in big data analysis, where efficient and accurate calculation of very small $p$-values is desired. An R package \emph{Qapprox} that implements related methods is available on CRAN.
\end{abstract}

\noindent%
{\it Keywords:}  Gaussian quadratic forms; mixture of chi-square; type I error; gamma distribution;  moment-matching method; kurtosis.
\vfill

\newpage
\spacingset{1.5} 

\section{Introduction}\label{sec:intro}
Quadratic forms of Gaussian variables appear in many statistical applications such as hypothesis testing and statistical power calculation. 
For instance, recent developments in statistical genetics often adopt the quadratic forms as test statistics to detect associations between genetic variants and complex phenotypes. Well-known examples include, to name a few, the kernel machine based score statistic for genetic pathway analysis \citep{liu2007semiparametric}, the sequence kernel association test statistic (SKAT) \citep{wu2011rare} and the sum of powered score test statistic (SPU) \citep{pan2014powerful} for rare-variant analysis. 
These new applications of analyzing big biological data require accurate calculation of small $p$-values. This requirement poses challenges to the calculation of distribution at the far-right tail. For example, in a typical gene-based whole-genome association study of about 20,000 human genes, the significance threshold under Bonferroni correction is at the level of $\alpha = 0.05/20000=2.5\times10^{-6}$.

In this paper we study efficient and accurate algorithms to calculate the distributions of the quadratic forms of Gaussian variables. Specifically, for a vector of $n$ Gaussian variables, $X\sim \N(\mu,\Sigma)$ with mean vector $\mu$ and covariance matrix $\Sigma$, its quadratic form is
\begin{align}
Q = X^\prime AX, 
\end{align}
where $A$ is an $n\times n$ positive semi-definite matrix. We are interested in calculating the distribution of $Q$, or, equivalently its right tail probability:
\begin{equation}
\label{equ.tail}
P(Q>q), 
\end{equation}
where $q\geq0$ is an observed value of $Q$.

First, note that the distribution of $Q$ can be calculated in an exact way. 
Consider a `decorrelation' of $X$ by the inverse of the Cholesky of $\Sigma$, $U=(\Sigma^{1/2})^{-1}X\sim \N((\Sigma^{1/2})^{-1}\mu, I)$. 
We can write $Q=U^\prime M U$, where $M = (\Sigma^{1/2})^\prime A\Sigma^{1/2}$. Denote $\Lambda$ the diagonal matrix of eigenvalues of $M$ and $P$ the orthonormal matrix such that $M=P\Lambda P^\prime$, we have
\begin{align}
\label{equ.Q}
Q\overset{d}{=}(Z+\tilde{\mu})^\prime \Lambda(Z+\tilde{\mu}),
\end{align}
where $Z$ is a vector of $n$ independent standard normal variables and $\tilde{\mu}=P^\prime(\Sigma^{1/2})^{-1}\mu$.
The formula (\ref{equ.Q}) indicates that the distribution of $Q$ is the same as a weighted sum of independent, potentially non-central, chi-squared random variables. Thus, the cumulative distribution function (CDF) of $Q$ can be obtained by inverting its characteristic function numerically~\citep{imhof1961computing, davies1980algorithm}. We call this calculation approach the exact method since it is in theory accurate except potential errors in the numerical procedures. 
However, the exact approach is time-consuming for two reasons: 1) it requires eigendecomposition with a computation cost in the order of $O(n^3)$, which is heavy when $n$ is large in big data analysis; 2) the algorithm's speed is sensitive to the number of positive eigenvalues and the point at which the CDF is to be evaluated~\citep{davies1980algorithm}. 

To speed up the computation, various approximation methods have been proposed based on the moments of $Q$. 
In particular, the Satterthwaite-Welch method (SW)~\citep{welch1938significance, satterthwaite1946approximate} matches the first two moments of $Q$ with a gamma variable. The Hall-Buckley-Eagleson approximation (HBE)~\citep{hall1983chi, buckley1988approximation} matches the skewness of $Q$ with a chi-squared variable while adjusting for the mean and the variance. Wood's $F$ approximation (Wood)~\citep{wood1989f} matches the first three moments with a three-parameter $F$ variable. Liu-Tang-Zhang method (LTZ)~\citep{liu2009new} tries to match both skewness and kurtosis of $Q$ with a non-central chi-squared variable while adjusting for the mean and the variance at the same time. A more comprehensive literature review can be found in~\citep{bodenham2016comparison}. 
In the applications to hypothesis testing of big data, these methods are lack of desired accuracy for controlling the type I error rate at small $\alpha$ levels (see Figure \ref{fig.tie_box_mu}). Furthermore, to obtain the moments of $Q$, these methods typically use eigenvalues of $A\Sigma$ or the trace of $(A\Sigma)^k$, $k=1,2,...$, which can be computationally intensive when $n$ is large.

To address these two issues, we first propose a numerical procedure that calculates the moments of $Q$ more efficiently in Section~\ref{sec.moments}. Then, in Section~\ref{sec.framework}, we describe a general moments-matching framework, which 
allows flexibly incorporating the information of high moments to improve accuracy without increasing computational difficulties.
Within this framework we propose a novel approximation method in Section~\ref{sec.MR}, that significantly improves the accuracy in computing the distribution of $Q$, especially for the far right tail probability. The computation time and type I error comparison results from extensive simulation study are presented in Section~\ref{sec.simu}. We conclude this paper with final remarks in Section~\ref{sec.conc}. 


\section{Methods}
\subsection{Computing the moments of $Q$}\label{sec.moments}
In order to approximate the distribution of $Q$, the first step is to calculate its moments. This step takes the dominate computational time in the approximation methods. We propose a strategy to speed up this process, which we didn't see in the relevant $Q$ distribution approximation literature yet but can be applied into these methods to improve computational efficiency. 

For a general quadratic form $Q$, the $k$th cumulant, $c_k$, is given by \citep{johnson1995continuous}, 
\begin{align}
\label{equ.c_eigen}
c_k &= 2^{k-1}(k-1)!\left(\tr \left(\Lambda^k\right) + k \tilde{\mu}' \Lambda^k\tilde{\mu}\right).
\end{align}
Subsequently, the mean $\mu_Q$, variance $\sigma_Q^2$, skewness $\gamma_Q$ and kurtosis $\kappa_Q$ can be calculated through $c_k$, $k=1,2,3,4$: 
\begin{equation}
\label{equ.moments_c}
\mu_Q=c_1, \quad{}\sigma_Q^2=c_2, \quad{} \gamma_Q=\frac{c_3}{c_2^{3/2}}, \quad{} \kappa_Q=\frac{c_4}{c_2^2}+3, 
\end{equation}


The formula in (\ref{equ.c_eigen}) requires calculating the eigenvalues, which has a computation cost in the order of $O(n^3)$ in practice. To improve the computation, \cite{liu2009new} proposed a ``trace" approach by rewriting (\ref{equ.c_eigen}) to be
\begin{align}
\label{equ.c_tr}
c_k  &= 2^{k-1}(k-1)!\left(\tr \left((A\Sigma)^k\right) + k \mu' (A\Sigma)^{k-1}A\mu\right).
\end{align}
Although formula (\ref{equ.c_tr}) does not need eigenvalues explicitly, the naive implementation of this approach, assuming $A\Sigma$ is known in advance, will require $k-1$ matrix multiplication. This approach could still be computationally expensive especially when $k$ is large. For example, it requires at least three matrix multiplications to get $c_1$ to $c_4$, which is 
not much more efficient than the eigenvalue approach (formula (\ref{equ.c_eigen})) as evidenced by Figure~\ref{fig.time_moments}.

We can improve the computational efficiency by not explicitly computing every matrix multiplication (see Lemma \ref{thm.moments}). The improvement is significant when the Gaussian variables are centered, i.e., $\mu\equiv0$. 

\begin{lemma}
\label{thm.moments}
Let $B$ be an $n\times n$ matrix. If $B^{k-1}$ and $B^{k}$, $k\geq 1$, are known, then we can compute $\tr (B^{2k-1})$ and $\tr(B^{2k})$ in $O(n^2)$ time.
\end{lemma}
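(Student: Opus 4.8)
The plan is to avoid forming any $O(n^3)$ matrix product by exploiting the fact that a trace depends only on the entries of its factors, not on the full product matrix. First I would use the semigroup property of matrix powers to rewrite the two target quantities as traces of products of the already-known matrices: since $B^{k-1}B^{k}=B^{2k-1}$ and $B^{k}B^{k}=B^{2k}$, we have $\tr(B^{2k-1})=\tr(B^{k-1}B^{k})$ and $\tr(B^{2k})=\tr((B^{k})^{2})$.

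Next I would invoke the elementary identity that, for any $n\times n$ matrices $C$ and $D$,
\begin{equation}
\tr(CD)=\sum_{i=1}^{n}(CD)_{ii}=\sum_{i=1}^{n}\sum_{j=1}^{n}C_{ij}D_{ji}.
\end{equation}
The crucial observation is that the right-hand side references only the individual entries of $C$ and $D$; it never requires assembling the product $CD$. Applying this identity with $(C,D)=(B^{k-1},B^{k})$ and with $(C,D)=(B^{k},B^{k})$ expresses each target trace as a double sum over the known entries of the supplied powers.

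Finally I would bound the cost. Each double sum ranges over the $n^{2}$ index pairs $(i,j)$, and each term requires a single multiplication followed by an accumulation into a running total, i.e.\ $O(1)$ work per term. Hence each of the two traces is obtained in $O(n^{2})$ arithmetic operations, which is the claimed bound.

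The step to emphasize is really the whole content of the lemma: the naive route would compute the product $B^{k-1}B^{k}$ explicitly (at $O(n^{3})$ cost) and then read off its diagonal, whereas the entry-wise double-sum formula collapses the diagonal computation to $O(n^{2})$. There is no genuine analytic obstacle here; the only care needed is to confirm that both $\tr(B^{2k-1})$ and $\tr(B^{2k})$ can indeed be written using \emph{only} the two supplied powers $B^{k-1}$ and $B^{k}$, which the index identities $2k-1=(k-1)+k$ and $2k=k+k$ guarantee.
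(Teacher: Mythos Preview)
Your proof is correct and follows essentially the same approach as the paper: write $\tr(B^{2k-1})=\sum_{i,j}(B^{k-1})_{ij}(B^{k})_{ji}$ and $\tr(B^{2k})=\sum_{i,j}(B^{k})_{ij}(B^{k})_{ji}$, then observe that each double sum requires only $O(n^{2})$ operations on the entries of the given matrices.
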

\begin{proof}
Through matrix multiplication we have $\tr (B^{2k-1}) = \sum_{i=1}^n\sum_{j=1}^n(B^{k-1})_{ij}(B^k)_{ji}$ and  
$\tr (B^{2k}) = \sum_{i=1}^n\sum_{j=1}^n(B^k)_{ij}(B^k)_{ji}$. 
\end{proof}

Lemma~\ref{thm.moments} says that we can compute the trace of $B^k$ by roughly $k/2$ (instead of $k-1$) matrix multiplications. In particular, it only takes a single matrix multiplication to get the first four moments. Thus, by using Lemma~\ref{thm.moments} the computation cost of getting the first four moments is about one-third of the naive trace approach. If only mean and variance are needed, as in the SW method, the computational efficiency is further improved since no matrix multiplication is needed by Lemma~\ref{thm.moments}. As will be shown later, SW is not very accurate for very small $p$-values. However, it is still adequate when the significance level is not too stringent, e.g., $\alpha=0.05$. Because of its fast speed, it might be useful as a screening method to speed up the overall data analysis. For example, in the genome-wide association studies most genes are not trait associated and thus their $p$-values are expected to be relatively large. The SW can be utilized to remove these genes before applying more accurate but computationally more intensive methods to calculate small $p$-values.

\subsection{A general framework for moment-matching method}\label{sec.framework}
Here we describe a general moment-matching framework that covers the literature approximation methods in Introduction. 
Ideally, engaging more moments and more parameters could allow more flexible distribution model and thus provide more accuracy in general. However, the resulting matching equations would also be more difficult to solve, and sometimes a solution does not exist. 
To ease this trade-off, our general  moment-matching framework can engage more moments without adding too much computational difficulty.

Specifically, let $Y$ 
be a random variable with known CDF $F_Y(y|\bm{\theta})$, where $\bm{\theta}=(\theta_1,...,\theta_m)$ denotes $m$ distribution parameters. In this framework, we match the moments of $Q$ with the moments of a linear transformation of $Y$: 
\begin{equation}
\label{equ.framework}
T=aY+b. 
\end{equation}
Note that the standardized moments of $T$ are always equal to those of $Y$, which are functions of $\bm{\theta}$ but not of $a$ and $b$. That is, for $k\geq 1$,
\begin{equation*}
\tilde{\mu}_{k,Y}(\bm{\theta}) :=\E\left(\frac{Y-\mu_Y}{\sigma_Y}\right)^k = \E\left(\frac{T-\mu_T}{\sigma_T}\right)^k =: \tilde{\mu}_{k,T}(\bm{\theta}).
\end{equation*}

The advantage of matching $Q$ with $T$, instead of directly matching $Q$ with $Y$, is that it has a potential to gain two ``free" parameters $a$ and $b$ corresponding to the mean and variance. 
In other words, the means and variances of $Q$ and $T$ are automatically matched as long as $\bm{\theta}$ is available. Therefore, $\bm{\theta}$ could be determined by matching higher ($k\geq 3$) standardized moments (e.g., skewness $\gamma=\tilde{\mu}_3$ and kurtosis $\kappa=\tilde{\mu}_4$), which provides extra flexibility of distribution especially at the tail. 

Specifically, by matching the mean and variance of $Q$ and $T$ we have
\begin{equation}
\label{equ.ab}
a^* = \frac{\sigma_Q}{\sigma_Y(\bm{\theta})}, \quad{} b^* = \mu_Q - \frac{\sigma_Q}{\sigma_Y(\bm{\theta})}\mu_Y(\bm{\theta}).
\end{equation}
Now, $\bm{\theta}$ could be determined by solving proper equations regarding the standardized higher ($k\geq 3$) moments of $Q$ and $Y$. We denote such equations in the following generic way: 
\begin{equation}
\label{equ.higher}
g_j(\tilde{\mu}_{k,Y}(\bm{\theta}),\tilde{\mu}_{k,Q}(\bm{\theta}); k=3, 4, ...) = 0, \quad{} j=1,2,...,m,
\end{equation}
where the $g_j$ functions of these standardized higher moments need to be designed properly in order to get solutions $\bm{\theta}^*$. Finally, with the determined $a^*$, $b^*$ and $\bm{\theta}^*$ the right-tail probability of $Q$ is approximated by
\begin{equation}
\label{equ.approx}
P(Q>q)\approx P(T>q) = P\left(Y>\frac{q-b^*}{a^*}\right)=1-F_Y\left(\frac{\sigma_Y}{\sigma_Q} (q-\mu_Q) + \mu_Y |\bm{\theta}^*\right).
\end{equation}

The existing approximation methods for $Q$ can fit into this general framework in (\ref{equ.framework}) -- (\ref{equ.approx}). 
When choosing $Y$ a gamma random variable $G(\alpha,1)$ with shape parameter $\alpha$ and scale parameter $1$, the Satterthwaite-Welch method is equivalent to solving $\bm{\theta} = \alpha$ by fixing $b^*=0$ in (\ref{equ.ab}) without engaging higher moment information in (\ref{equ.higher}). 
Hall-Buckley-Eagleson method also uses the gamma variable; it is equivalent to solving $\alpha$ by matching the skewness in (\ref{equ.higher}). 
Choosing $Y$ a non-central chi-squared variable $\chi_d^2(\delta)$ with degrees of freedom $d$ and noncentrality $\delta$, Liu-Tang-Zhang method is equivalent to solving $\bm{\theta} = (d, \delta)$ by matching the skewness and kurtosis in (\ref{equ.higher}). 
Letting $Y$ follow an $F$ distribution with degrees of freedom $d_1$ and $d_2$, Wood's F approximation is equivalent to solving $\bm{\theta} = (d_1, d_2)$ by fixing $b^*=0$ in (\ref{equ.ab}) and matching the skewness in (\ref{equ.higher}). 

\subsection{Strategies for matching higher moments}\label{sec.MR}
Solving $\bm{\theta}$ by matching  standardized higher moments in (\ref{equ.higher}) is not trivial. Depending on the distribution of $Y$, due to restrictions on the domains of these moments and their interdependence, the solution to equations in (\ref{equ.higher}) often do not exist for many seemingly obvious choices of the $g$ functions.  For example, one design for (\ref{equ.higher}) is to match skewness and kurtosis at the same time, such as described in the Liu-Tang-Zhang method ~\citep{liu2009new}. 
The authors provided a nice result on the necessary and sufficient condition, i.e., $c_3^2>c_4c_2$, for the solution of $\bm{\theta} = (d, \delta)$ exists. For centered Gaussian variables, the condition becomes $\left(\sum_{i=1}^n\lambda_i^3\right)^2>\left(\sum_{i=1}^n\lambda_i^2\right)\left(\sum_{i=1}^n\lambda_i^4\right)$. In this case, however, a general inequality given in Lemma~\ref{thm.holder} indicates that such condition will never be met. This is somewhat surprising because $Q$ follows a chi-square distribution when $\Sigma=I$. However,  when $\Sigma\neq I$, it is impossible to find a chi-square distribution that matches $Q$'s skewness and kurtosis by adding a non-centrality parameter.
Even for non-central Gaussian variables, the solution still often does not exist 
(see \cite{duchesne2010computing, bodenham2016comparison} and the simulations in Section \ref{sec.simu}). When the solution does not exist, 
Liu-Tang-Zhang method steps back to matching only the skewness, which is exactly the Hall-Buckley-Eagleson method. We can also modify the method to matching the kurtosis instead, which could show some improved results in some scenarios. However, in either way it will lose the information of the unmatched moment. 

\begin{lemma}[Littlewood's inequality]
\label{thm.holder}
Define $S_a= \sum_{i=1}^n\lambda_i^a$, $a>0$, $\lambda_i\geq 0$, $i=1,...,n$. Suppose $0< a< b< c$, then $S_b^{c-a}\leq S_a^{c-b}S_c^{b-a}$. The equality holds if and only if $\lambda_1=\lambda_2=\cdots=\lambda_n$.
\end{lemma}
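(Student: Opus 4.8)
The plan is to recognize Littlewood's inequality as a direct consequence of Hölder's inequality, obtained by writing the middle exponent $b$ as a convex combination of the outer exponents $a$ and $c$. First I would set $t = \frac{c-b}{c-a}$ and $1-t = \frac{b-a}{c-a}$, both of which lie in $(0,1)$ since $0<a<b<c$. These weights satisfy $ta+(1-t)c = b$, so that for each index $\lambda_i^b = (\lambda_i^a)^t(\lambda_i^c)^{1-t}$, which splits each summand of $S_b$ into a product tailored for Hölder.

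Next I would apply Hölder's inequality to $S_b = \sum_{i=1}^n (\lambda_i^a)^t(\lambda_i^c)^{1-t}$ with the conjugate exponents $p = 1/t$ and $q = 1/(1-t)$, which indeed satisfy $1/p+1/q = t+(1-t)=1$. Writing $u_i = \lambda_i^{at}$ and $v_i = \lambda_i^{c(1-t)}$, one checks $u_i^p = \lambda_i^a$ and $v_i^q = \lambda_i^c$, so Hölder yields $S_b \le S_a^{\,t}\,S_c^{\,1-t}$. Raising both sides to the power $c-a>0$ and using the identities $t(c-a)=c-b$ and $(1-t)(c-a)=b-a$ produces exactly $S_b^{c-a}\le S_a^{c-b}S_c^{b-a}$. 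An equivalent and perhaps cleaner packaging would be to take logarithms and observe that the claim is precisely the statement that $a\mapsto \log S_a$ is convex, which one could alternatively verify directly from a variance-type expression for its second derivative; I would mention this only as a remark.

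For the equality clause I would invoke the equality condition in Hölder's inequality, namely that $S_b = S_a^{\,t}S_c^{\,1-t}$ forces the sequences $(u_i^p)$ and $(v_i^q)$, i.e. $(\lambda_i^a)$ and $(\lambda_i^c)$, to be proportional. For strictly positive $\lambda_i$ this makes $\lambda_i^{c-a}$ constant in $i$, hence $\lambda_1=\cdots=\lambda_n$, and conversely equality is immediate when all $\lambda_i$ coincide.

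The main obstacle I anticipate is not the inequality itself, which is a textbook Hölder argument, but stating the equality condition cleanly in the presence of vanishing eigenvalues: strictly speaking, when some $\lambda_i=0$ are allowed, the proportionality condition only forces the \emph{nonzero} $\lambda_i$ to be equal, so the verbatim claim ``$\lambda_1=\cdots=\lambda_n$'' needs a mild reading. I would resolve this by restricting the equality analysis to the positive $\lambda_i$, which is harmless since discarding zero eigenvalues leaves every $S_a$ unchanged, and by noting that in the intended application ($a=2,b=3,c=4$, giving $S_3^2\le S_2S_4$) the point being used is simply that the inequality is strict whenever the $\lambda_i$ are not all equal.
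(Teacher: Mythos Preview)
Your proposal is correct and follows essentially the same route as the paper: both write $b$ as a convex combination $b = ka + (1-k)c$ (your $t$ is the paper's $k=(c-b)/(c-a)$), split $\lambda_i^b = \lambda_i^{ka}\lambda_i^{(1-k)c}$, and apply H\"older with conjugate exponents $1/k$ and $1/(1-k)$ to obtain $S_b \le S_a^{k}S_c^{1-k}$, then raise to the power $c-a$. Your additional remarks on log-convexity and the caveat about zero $\lambda_i$ in the equality case go slightly beyond what the paper records, but the core argument is identical.
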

\begin{proof}
The lemma follows a classic inequality that can be found in literature (cf. \cite{hardy1934inequalities}, p.28, Th. 18). Here we show it is equivalent to the Hölder's inequality. 

Write $b = ka + (1-k)c$, $0<k<1$. Then $c-b=k(c-a)$, $b-a=(1-k)(c-a)$. The inequality is equivalent to 
\begin{align*}
S_b^{c-a}\leq S_a^{k(c-a)}S_c^{(1-k)(c-a)} \Longleftrightarrow \sum_{i=1}^n\lambda_i^{ka}\lambda_i^{(1-k)c}\leq  \left(\sum_{i=1}^n\lambda_i^a\right)^{k}\left( \sum_{i=1}^n\lambda_i^a\right)^{1-k}.
\end{align*}
Define $p=1/k$, $q=1/(1-k)$, $x_i=\lambda_i^{a/p}$, $y_i=\lambda_i^{c/q}$, the above inequality becomes
\begin{align*}
\sum_{i=1}^nx_iy_i\leq  \left(\sum_{i=1}^nx_i^p\right)^{1/p}\left( \sum_{i=1}^ny_i^q\right)^{1/q},
\end{align*}
which is the Hölder's inequality. The equality holds if and only if $x_i$ and $y_i$, $i=1,...,n$, are proportional which is equivalent to $\lambda_i$, $i=1,...,n$, are equal to each other. 
\end{proof}


In order to overcome the difficulty of matching both skewness and kurtosis, while at the same time still keeping the information from both, we propose the following solution. First, we choose the gamma distribution model, i.e., $Y \sim G(\alpha, 1)$. Here we fix the gamma scale parameter to be 1 because it is redundant with the coefficient parameter $a$ in (\ref{equ.framework}) and thus does not affect the calculation in (\ref{equ.approx}). Second, we propose two types of $g$ functions given below for equation (\ref{equ.higher}), which incorporate the information of both skewness and kurtosis. Comparing with the Liu-Tang-Zhang method, the solution of our matching equation always exists. 
At the same time, it can be more accurate in approximating the right tail probability in (\ref{equ.tail}) at relatively large threshold $q$. 

Specifically, the first method is called the moment-ratio matching (MR) method.
With one parameter $\alpha$, we match the ratio between skewness and kurtosis. Define 
\begin{equation*}
g_1(\tilde{\mu}_{3,Y},\tilde{\mu}_{3,Q},\tilde{\mu}_{4,Y},\tilde{\mu}_{4,Q})  = \frac{\tilde{\mu}_{3,Y}}{\tilde{\mu}_{4,Y}-3}-\frac{\tilde{\mu}_{3,Q}}{\tilde{\mu}_{4,Q}-3}.
\end{equation*}
The solution of equation (\ref{equ.higher}) always exists and has a simple closed form: $\alpha^*=\frac{9\tilde{\mu}_{3,Q}^2}{(\tilde{\mu}_{4,Q}-3)^2}$. 

The second method is called the minimized matching error (ME) method.
That is, we choose $\alpha$ to minimize the Euclidean distance between $(\tilde{\mu}_{3,Y}, \tilde{\mu}_{4,Y})$ and $(\tilde{\mu}_{3,Q}, \tilde{\mu}_{4,Q})$. For that, the corresponding $g$ function is the first-order derivative of the distance as a function of $\alpha$, 
\begin{equation}
\label{equ.me}
g_1(\tilde{\mu}_{3,Y},\tilde{\mu}_{3,Q},\tilde{\mu}_{4,Y},\tilde{\mu}_{4,Q})  = \frac{\partial}{\partial\alpha}\left[(\tilde{\mu}_{3,Y}-\tilde{\mu}_{3,Q})^2+(\tilde{\mu}_{4,Y}-\tilde{\mu}_{4,Q})^2\right].
\end{equation}
Accordingly, equation~(\ref{equ.higher}) becomes 
\begin{equation}
\label{equ.me_cubic}
\tilde{\mu}_{3,Q}\alpha^{3/2} -2(10-3\tilde{\mu}_{4,Q})\alpha - 36 = 0. 
\end{equation}
Since $\tilde{\mu}_{3,Q}>0$, it is straightforward to show that there exists one and only one real root $\alpha^*>0$.

\section{Simulation results}\label{sec.simu}

\subsection{Computation time comparison}
In this section, we compare the computation time of the proposed moment-ratio (MR) matching method with other existing methods. The comparison is two-fold. First, we examine the computation efficiency of different methods for obtaining the eigenvalues or the moment estimates of $Q$. Second, given the eigenvalues and moment estimates, we compare various methods for calculating the final right-tail probability based on different distributions.

We first consider the correlation matrix $\Sigma$ to be a polynomial decaying matrix, i.e., $(\Sigma)_{ij}=1/|i-j|$, $1\leq i,j\leq n$, $n=100,200,...,5,000$. The eigenvalue decomposition (formula (\ref{equ.c_eigen})), the naive trace method (formula (\ref{equ.moments_c})) and the proposed moment computation method (Lemma~\ref{thm.moments}) were performed 1,000 times and the average run times were summarized in the left panel of Figure~\ref{fig.time_moments}. Next, for each $n$, we simulate $50,000$ quadratic form of centered Gaussian variables, $Q=X^\prime X$, with correlation matrix $\Sigma$. The exact Davies method, the LTZ method (based on chi-square distribution), the HBE method and the proposed MR method (gamma distribution) and Wood method (F distribution) were used to calculate the $p$-values assuming the eigenvalues of $\Sigma$ are known. The total runtimes were recorded and summarized in the right panel of Figure~\ref{fig.time_moments}. For the second comparison, the Exact and LTZ methods were implemented by a widely used \emph{R} package \emph{CompQuadForm}. All other methods were implemented by the authors. All computations were completed on an Intel Core-i5 8350U processor at 1.70 GHz with 16 GB of RAM, running Microsoft R Open 3.5.0. 

As Figure~\ref{fig.time_moments} shows, the computation time of the trace approach and the eigenvalue approach is about the same but the proposed moment calculation method can save the computation time by 1/2 to 2/3. Compared to the exact calculation by Davies method, the approximation methods for calculating the final right-tail probability (based on chi-square or gamma or F distribution) could be hundreds of times more efficient. Moreover, unlike Davies method, the computation costs of the approximation methods do not increase as the dimension $n$ increases. Meanwhile, we also note that the overall computation time is dominated by the calculation of the eigenvalues/moments.

\begin{figure}
\includegraphics[width=0.5\textwidth]{./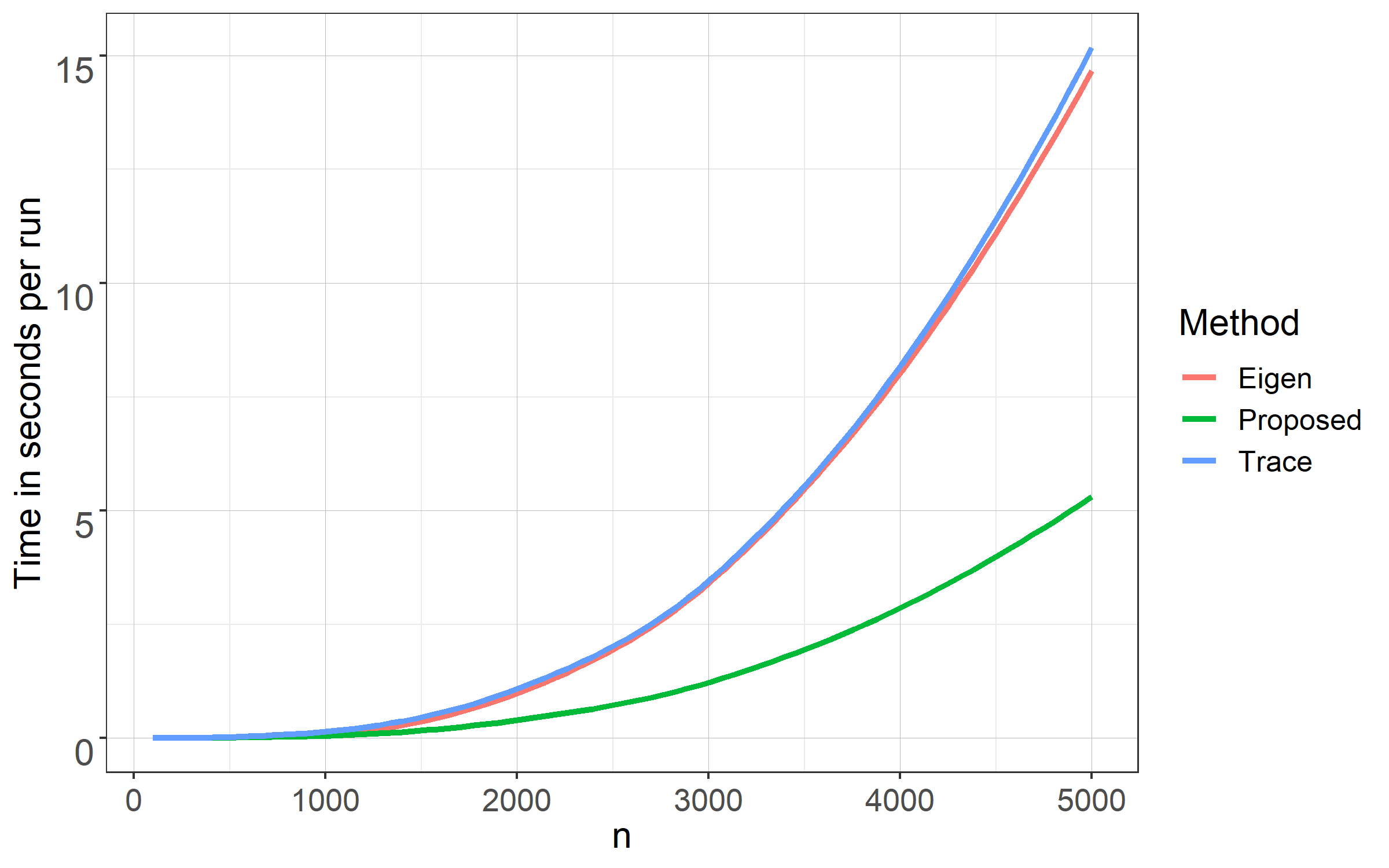}
\includegraphics[width=0.5\textwidth]{./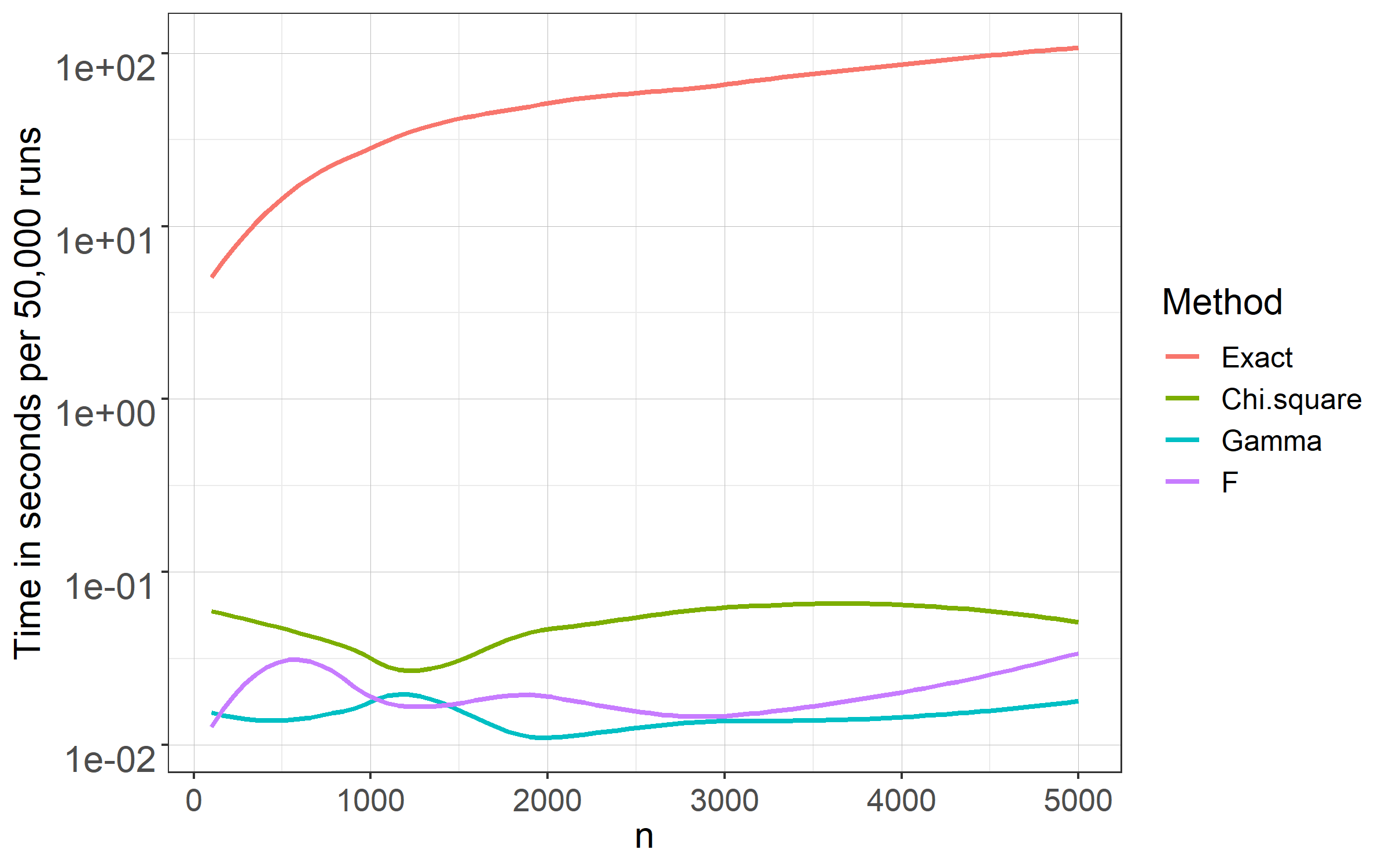}
\caption{Computation time comparison. Left: eigenvalues/moments computation comparison. Eigen: eigenvalue decomposition (formula (\ref{equ.c_eigen})). Proposed: the proposed moment computation method Lemma~\ref{thm.moments}. Trace: naive trace method (formula (\ref{equ.moments_c})). Right: distribution computation comparison per $50,000$ runs. Exact: Davies method. $n$ is the dimension of the correlation matrix. 
}
\label{fig.time_moments}
\end{figure}

\subsection{Type I error comparison}
In this section we systematically evaluate the accuracies of the proposed and existing $Q$ distribution approximation methods in the context of hypothesis testing. Specifically, consider $Q$ as the test statistic and $q$ as its observed value. The right-tail probability $P(Q>q)$ can be regarded as the $p$-value for testing the null hypothesis $H_0 : \mu_X=\mu$. 
In the simulation study, we fix $A=I$ and vary the correlation matrix $\Sigma$ (since $A$ can be considered as a factor that varies the correlation matrix $\Sigma$). 
For a given mean vector $\mu$ and correlation matrix $\Sigma$, $2\times 10^7$ of random Gaussian vectors $X\sim \N(\mu,\Sigma)$ were generated, each led to an observed $q$. 
Consequently, each approximation method was applied to generate $2\times 10^7$ $p$-values. The empirical type I error rate for a given approximation method is defined as the proportion of rejections under $H_0$, i.e., the proportion of $p$-values less than a nominal level $\alpha$. 
A good approximation method would have the empirical type I error rate  
being close to the nominal $\alpha$. If a ratio between the empirical type I error rate and the nominal $\alpha$ is larger than 1, it indicates the approximation method leads to a liberal test (i.e., rejecting more than appropriate); a ratio less than 1 indicates a conservative test. 

Motivated by genetic data, we examined a wide variety of covariance patterns that are potentially with block structures. Consider the $n\times n$ covariance matrix 
\[
\Sigma = 
\begin{bmatrix}
\Sigma_{11} & \Sigma_{12} \\
\Sigma'_{12} & \Sigma_{22} 
\end{bmatrix},
\]
where each of the blocks is of size $(n/2) \times (n/2)$. Equal correlation matrix $E_k$ and polynomially decaying correlation matrix $D_k$ were used to define $\Sigma$ or its blocks:
\begin{align}
\label{equ.equ-decay-matrix}
E_k(\rho) &: E_k(i,j)=\rho, \quad{} 1\leq i\neq j\leq k \text{ and }  0\leq\rho<1, \\
\label{equ.B}
D_k(\phi) &: D_k(i,j)=1/|i-j|^\phi, \quad{} 1\leq i\neq j\leq k \text{ and } \phi>0.
\end{align} 
Table~\ref{tbl.sigma} summarizes a total of 12 correlation types for $\Sigma$. 
Multiple values of the matrix parameters $\rho=0.9,0.5,0.1$ and $\phi=0.2, 1, 3$ were considered to model the strong, moderate and weak correlations, respectively. The dimensions $n=10$, $50$, $100$ and $500$ were considered. 
We also tested three configurations of the mean vector: a central Gaussian case $\mu=\mu_1\equiv 0$, and two non-central Gaussian cases $\mu=\mu_2\equiv 1$ and $\mu=\mu_3=(\underbrace{1,...,1}_{n/2},\underbrace{0,...,0}_{n/2})$. Our simulation settings are pretty extensive; altogether there are 432 different combinations of the mean $\mu$ and correlation matrix $\Sigma$ for each given $n$. 

\begin{table}[]
\centering
\caption{The correlation matrix $\Sigma$ used in the simulations based on the correlation models in (\ref{equ.equ-decay-matrix}) and (\ref{equ.B}). Three types: I (upper left correlations with $\Sigma_{22} = I$); II (block-wise correlations); III (the whole correlation matrix follows the models). $\Sigma_{12}=0$.}
\label{tbl.sigma}
\begin{tabular}{@{}llll@{}}
\toprule
Type                 & I  & II & III  \\ \midrule
Equal($\rho$)      & \begin{tabular}[c]{@{}l@{}}$\Sigma_{11}= E_{n/2}(\rho)$\end{tabular}           & \begin{tabular}[c]{@{}l@{}}$\Sigma_{11} = \Sigma_{22}= E_{n/2}(\rho)$ \end{tabular}                    & \begin{tabular}[c]{@{}l@{}}$\Sigma= E_{n}(\rho)$\end{tabular}                \\
Poly($\phi$)     & \begin{tabular}[c]{@{}l@{}}$\Sigma_{11}= D_{n/2}(\phi)$\end{tabular}              & \begin{tabular}[c]{@{}l@{}}$\Sigma_{11} = \Sigma_{22}= D_{n/2}(\phi)$\end{tabular}                     & \begin{tabular}[c]{@{}l@{}}$\Sigma= D_{n}(\phi)$\end{tabular}               \\
Inv-Equal($\rho$)*  & \begin{tabular}[c]{@{}l@{}}$\Sigma_{11}= E^{-1}_{n/2}(\rho)$\end{tabular}             & \begin{tabular}[c]{@{}l@{}}$\Sigma_{11} = \Sigma_{22}= E^{-1}_{n/2}(\rho)$\end{tabular}                    & \begin{tabular}[c]{@{}l@{}}$\Sigma= E^{-1}_{n}(\rho)$\end{tabular}                \\
Inv-Poly($\phi$)* & \begin{tabular}[c]{@{}l@{}}$\Sigma_{11}= D^{-1}_{n/2}(\phi)$\end{tabular}              & \begin{tabular}[c]{@{}l@{}}$\Sigma_{11} = \Sigma_{22}= D^{-1}_{n/2}(\phi)$\end{tabular}                    & \begin{tabular}[c]{@{}l@{}}$\Sigma= D^{-1}_{n}(\phi)$\end{tabular}  \\ \bottomrule
\multicolumn{4}{l}{\footnotesize{*$\Sigma$ is standardized to become a correlation matrix.}}
\end{tabular}
\end{table}

The approximation methods to be compared are 1) MR: the proposed moment-ratio matching method; 2) ME: the proposed minimized matching error method; 3) SW: Satterthwaite-Welch method; 4) HBE: Hall-Buckley-Eagleson method; 5) Wood: Wood's $F$ approximation; 6) LTZ: Liu-Tang-Zhang method; 7) LTZ4: modified Liu-Tang-Zhang method (i.e., when skewness and kurtosis could not be matched simultaneously, we chose to match kurtosis instead of skewness); and 8) Exact: the Davies method that inverts the characteristic function. We used R package \emph{CompQuadForm} for the LTZ method and the Exact method. Other methods were implemented by the authors. 

The boxplots in Figure~\ref{fig.tie_box_005001} show the comparison of all eight methods at relatively larger nominal $\alpha$ levels, $\alpha = 0.05$ and $0.01$. Each box consists of multiple empirical type I errors under different combinations of the correlation matrix $\Sigma$ and mean vector $\mu$. It is clear that there exist many scenarios in which the Wood method is ultra-conservative while the SW method could inflate the type I errors at $\alpha = 0.01$. The other approximation methods seem to perform well as their empirical type I error rates are within $(0.8, 1.1)$ of the nominal levels. Unsurprisingly, the Exact method is the most accurate method with the least amount of variation around the nominal levels.

Next we consider smaller nominal levels, $\alpha=1\times 10^{-4}$ and $2.5\times 10^{-6}$. The comparison results are summarized in Figure~\ref{fig.tie_box_mu} as boxplots stratified by the mean vector $\mu$. Each box represents empirical type I errors under different correlation matrices. The SW and Wood methods were dropped because their empirical type I errors were further away from $\alpha$. When $\mu=\mu_1\equiv 0$ (the central Gaussian case), as expected, the exact method was accurate at both levels. The proposed moment ratio matching method (MR) also controlled the type I error rates very well. It only varied slightly more than the exact method when $\alpha= 2.5\times 10^{-6}$ and is slightly conservative when $\alpha=10^{-4}$. The minimized matching error method (ME) was slightly more inflated compared to MR. The Liu-Tang-Zhang (LTZ) method performed the same as the Hall-Buckley-Eagleson (HBE) method because it cannot match both skewness and kurtosis when $\mu\equiv 0$, as proven by Lemma \ref{thm.holder}. Therefore, in this case it reduced to matching the skewness only. Both methods seemed to inflate the type I errors in multiple occasions across the dimension $n$. The modified Liu-Tang-Zhang (LTZ4) method seemed to perform better than the original version but still with considerable amount of inflations. When $\mu=\mu_2$ or $\mu_3$ (the non-central Gaussian case), the inflation of LTZ's and LTZ4's type I errors still existed but were greatly reduced because now in various scenarios both skewness and kurtosis could be matched.  The proposed MR method still controlled type I errors well with small variation around $\alpha$.

To further demonstrate the impact of unmatched moments on the type I errors, we present the boxplots stratified by whether the skewness and kurtosis can be matched (`unmatched' vs. `matched') in the LTZ method. Figure~\ref{fig.tie_box_LTZ} shows that when these two moments were matched, the performance of LTZ method is on par with the MR method and the exact method. However, if they are not matched, choosing either skewness or kurtosis  would inflate the type I errors while the MR method would not.

\begin{figure}
\includegraphics[width=1\textwidth]{./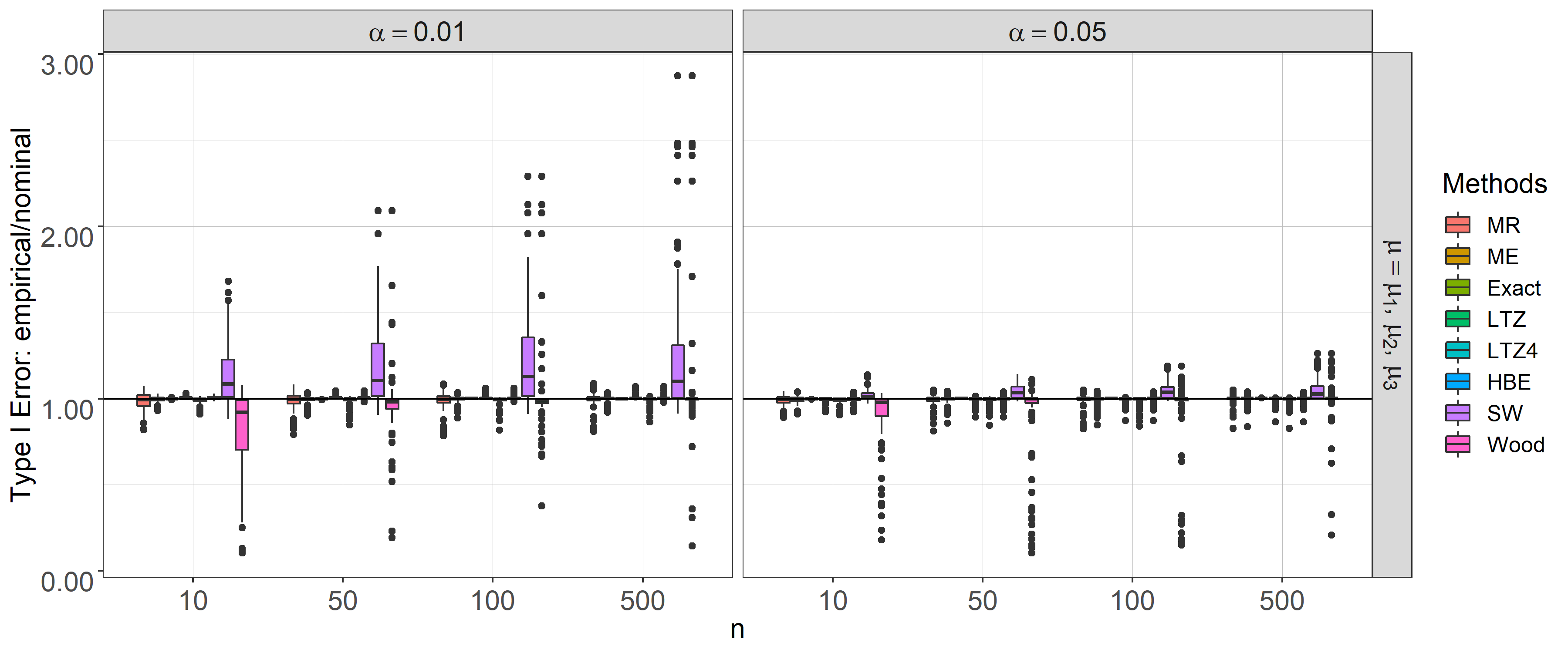}
\caption{Boxplots of the ratios between empirical type I error rates and the nominal levels, $\alpha = 0.01$ and $0.05$.  MR: moment-ratio matching method. ME: minimized matching error method. Exact: the Davies method. LTZ: Liu-Tang-Zhang method. LTZ4: modified Liu-Tang-Zhang method. HBE: Hall-Buckley-Eagleson method. SW: Satterthwaite-Welch method. Wood: Wood's $F$ approximation.} 
\label{fig.tie_box_005001}
\end{figure}

\begin{figure}
\includegraphics[width=1\textwidth]{./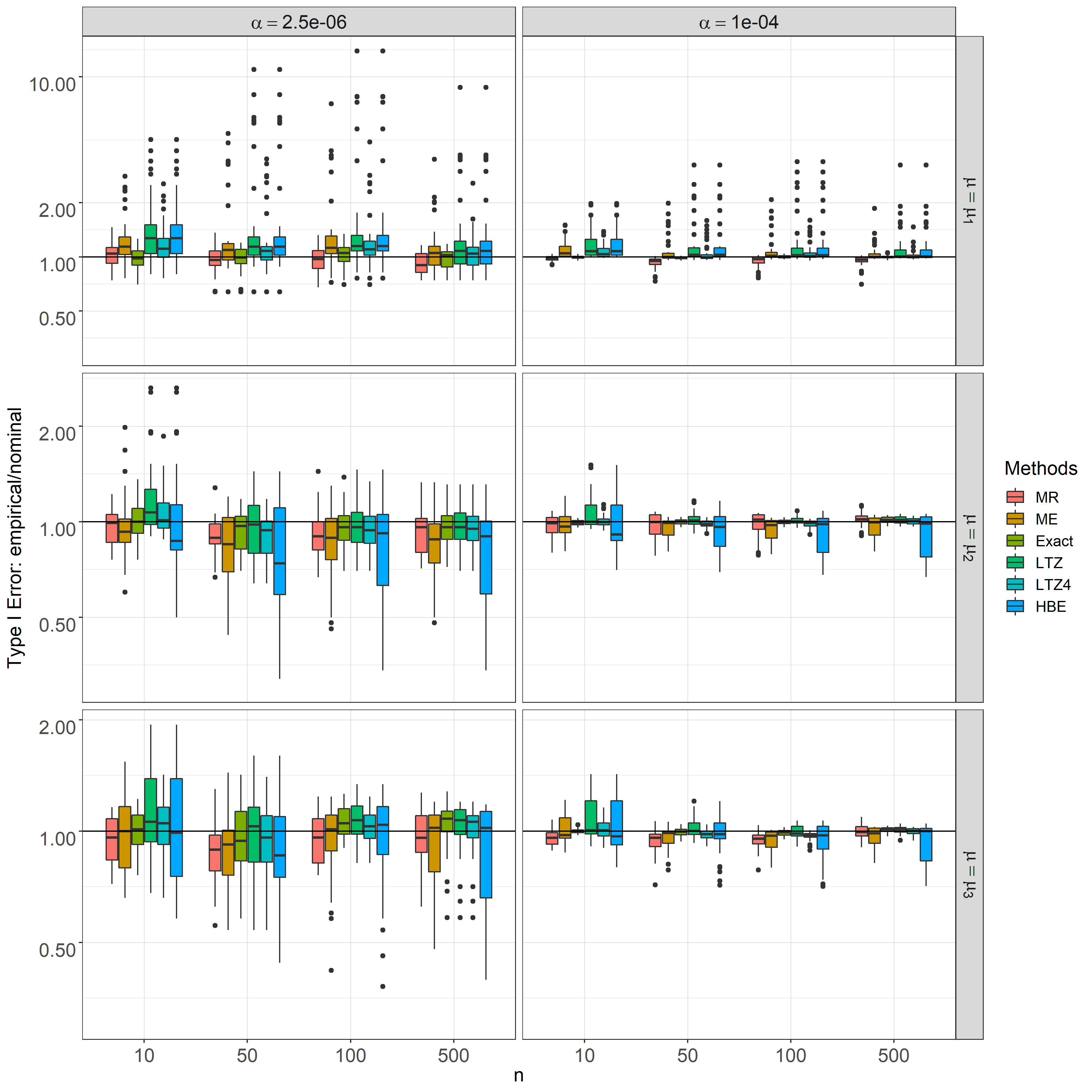}
\caption{Boxplots of the ratios between empirical type I error rates and the nominal levels, $\alpha = 2.5\times 10^{-6}$ and $1\times 10^{-4}$, stratified by the mean vector $\mu_1\equiv 0$, $\mu_2\equiv 1$ and $\mu_3=(1,...,1,0,...,0)$.  MR: moment-ratio matching method. ME: minimized matching error method. Exact: the Davies method. LTZ: Liu-Tang-Zhang method. LTZ4: modified Liu-Tang-Zhang method. HBE: Hall-Buckley-Eagleson method. } 
\label{fig.tie_box_mu}
\end{figure}

\begin{figure}
\includegraphics[width=1\textwidth]{./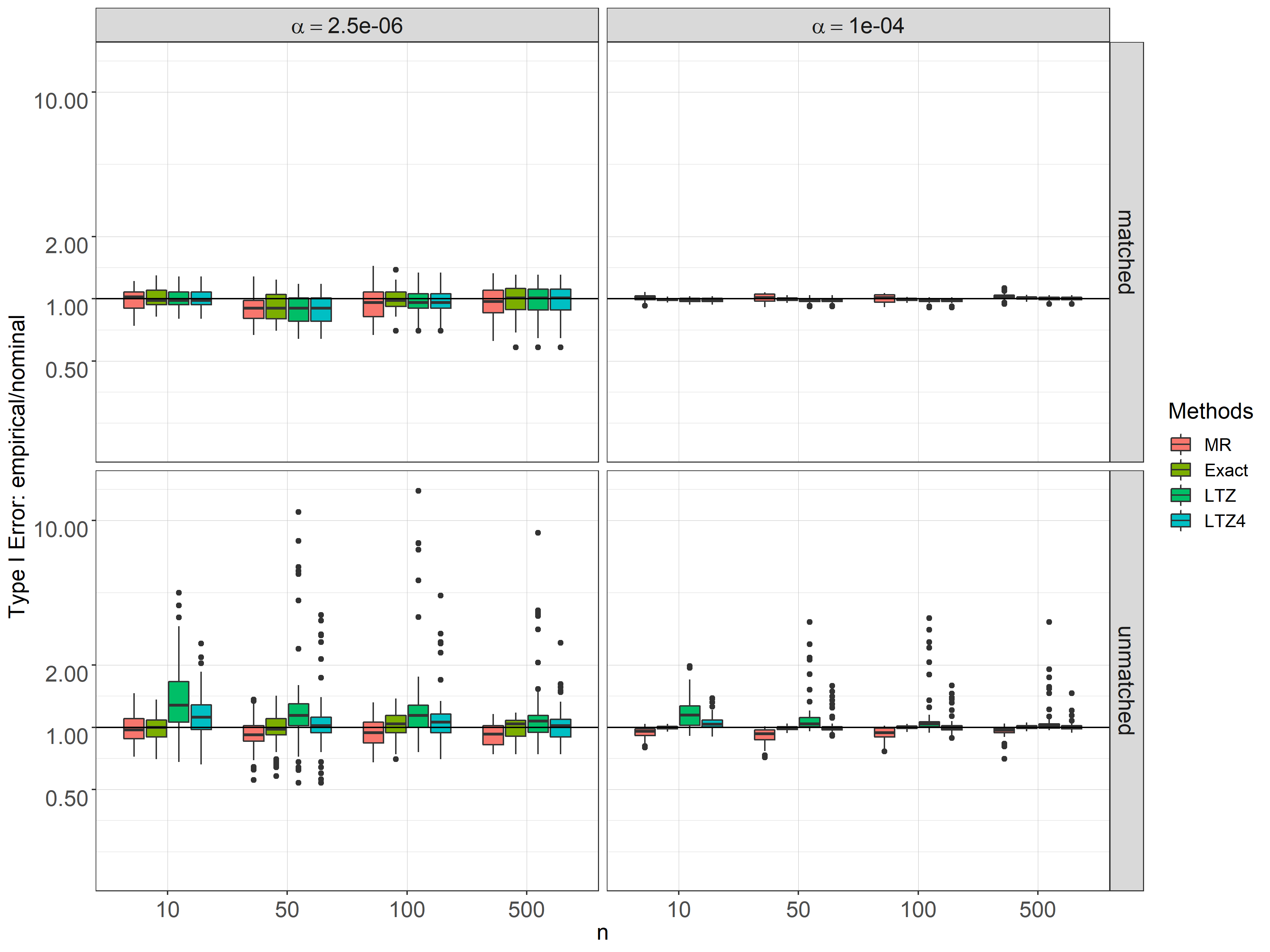}
\caption{Boxplots of the ratios between empirical type I error rates and the nominal levels, $\alpha = 2.5\times 10^{-6}$ and $1\times 10^{-4}$, stratified by whether skewness and kurtosis were matched in the LTZ method.  MR: moment-ratio matching method. Exact: the Davies method. LTZ: Liu-Tang-Zhang method. LTZ4: modified Liu-Tang-Zhang method.} 
\label{fig.tie_box_LTZ}
\end{figure}

\section{Conclusion}
\label{sec.conc}
We propose a moment-ratio matching method that improves the accuracy of approximating the right-tail probability of quadratic forms of Gaussian variables. The proposed method controls the type I error rates almost as good as the exact method while improves the computational efficiency by 1/2 to 2/3. This method is expected to have broad applications in hypothesis testing, such as genetic association studies where Gaussian quadratic forms are frequently used as statistics to combine signals from multiple sources. 

Beyond the Gaussian quadratic forms, we believe the idea of utilizing higher-order moments without exactly matching them, as described in Section~\ref{sec.framework} and \ref{sec.MR}, is applicable to more general distribution approximation problems where matching both skewness and kurtosis is difficult or even impossible.  Our results demonstrate that a wise choice of matching equation that incorporates both skewness and kurtosis, such as the moment-ratio matching, can obtain accurate approximation like both moments are matched.  

%
%
%
%
%

\bibliographystyle{Chicago}

\bibliography{mybibfile}
\end{document}